\def\qed{$\Box$}
\begin{document}

\title{NP-Hardness of Speed Scaling with a Sleep State}
\author{Gunjan Kumar, Saswata Shannigrahi}
\institute{Indian Institute of Technology Guwahati, India. \email{\{k.gunjan,saswata.sh\}@iitg.ernet.in} }
\maketitle


\begin{abstract} 
A modern processor can dynamically set it's speed while it's active, and can make a transition to sleep state when required. When the processor is operating at a speed $s$, the energy consumed per unit time is given by a convex power function $P(s)$ having the property that $P(0) > 0$ and $P''(s) > 0$ for all values of $s$. Moreover, $C >  0$ units of energy is required to make a transition from the sleep state to the active state. The jobs are specified by their arrival time, deadline and the processing volume. 

We consider a scheduling problem, called {\it speed scaling with sleep state}, where each job has to be scheduled within their arrival time and deadline, and the goal is to minimize the total energy consumption required to process these jobs. Albers et. al. 
\cite{albers} proved the NP-hardness of this problem by reducing an instance of an NP-hard {\it partition problem} to an instance of this scheduling problem. The instance of this scheduling problem consists of the arrival time, the deadline and the processing volume for each of the jobs, in addition to $P$ and $C$. Since $P$ and $C$ depend on the instance of the {\it partition problem}, this proof of the NP-hardness of the {\it speed scaling with sleep state} problem doesn't remain valid when $P$ and $C$ are fixed. In this paper, we prove that the {\it speed scaling with sleep state} problem remains NP-hard for any fixed positive number $C$ and convex $P$ satisfying $P(0) > 0$ and $P''(s) > 0$ for all values of $s$. 

\noindent \textbf{Keywords:} Energy efficient algorithm, scheduling algorithm, NP-hardness
\end{abstract}

\section{Introduction}

A modern processor can dynamically set it's speed while it's active, and can make a transition to sleep state when required. When the processor is operating at a speed $s$, the energy consumed per unit time is given by a convex power function $P(s)$ having the property that $P(0) > 0$ and $P''(s) > 0$ for all values of $s$. Therefore, some energy is consumed  even if the processor is not scheduling any job in the active state. On the other hand, no energy is consumed when the processor is in the sleep state. However, $C >  0$ units of energy is required to make a transition from the sleep state to the active state and therefore it is not always fruitful to  go asleep when there is no work to be processed at some point of time. We assume that no energy is required to make a transition from the active state to the sleep state, as we can always include this energy requirement in the sleep to active state transition. A number of problems have been studied under this model, e.g., 
\cite{albers}, \cite{bampis}, \cite{baptiste}, \cite{baptiste_chrobak}, \cite{bansal_chan_lam}, 
\cite{bansal_chan_pruhs}, \cite{chan}, \cite{han}, \cite{irani}, \cite{irani_pruhs}, \cite{yao}.

The jobs are specified by their arrival time, deadline and the processing volume. We consider a scheduling problem where each job has to be scheduled within their arrival time and deadline, and the goal is to minimize the total energy consumption required to process these jobs. Albers et. al. \cite{albers} proved the
NP-hardness of this problem by reducing an instance of an NP-hard {\it partition problem} (defined below) to an instance of this scheduling problem. The instance of this scheduling problem consists of the arrival time, the deadline and the processing volume for each of the jobs, in addition to $P(s)$ and $C$ that depends on the problem instance of the partition problem. As a result, this proof of NP-hardness doesn't remain valid when we are given any fixed convex function $P(s)$ and a positive number $C$. In this paper, we prove that the problem remains NP-hard for any fixed positive number $C$ and convex function $P$ satisfying $P(0) > 0$ and $P''(s) > 0$ for all values of $s$. 

We would do the reduction from the following NP-hard {\it partition problem}: Given a finite set $A$ of $n$ positive integers $a_{1}, a_{2}, \ldots, a_{n}$, the problem is to decide whether there exists a subset 
$A' \subset A $  such that $\sum_{a_{i} \in A'} a_{i} = \sum_{a_{i} \notin A'} a_{i}$. It's assumed  
that $a_{max} \ge 2$; otherwise, the problem becomes trivial.

\section{The Reduction and it's Properties} 
\begin{figure}[t]
\centerline{{\resizebox*{4.8in}{1in}
{\includegraphics{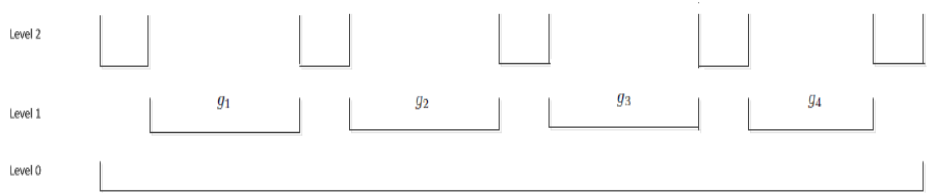}}
}}
\caption{An instance of $I_{S}$ with $n = 4$} 
\end{figure}
Let us start with a few definitions and notations. The {\it density} of an interval is defined as the total workload of the jobs that completely lie in an interval divided by the length of the interval. The critical speed $s*$ for $P(s)$ is defined as the minimum speed that minimizes $\frac{P(s)}{s}$. Note that the critical speed is not well-defined if $\frac{P(s)}{s}$ decreases monotonically. However, this is not a realistic case as this would mean we can schedule all jobs at an infinite speed to get the schedule that requires the minimum amount of energy consumption. Therefore, we assume that $\frac{P(s)}{s}$ decreases for $s < s*$ and attains the
minimum at $s = s*$. Under this assumption, the following property can be easily observed easily.

\begin{lemma}
\label{lemma1}
$P'(s) < \frac {P(s)}{s} < \frac {P(s*)}{s*} $ for $ s < s* $ and $ P'(s*) = \frac{P(s*)}{s*}$.
\end{lemma}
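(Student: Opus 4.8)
The plan is to study the auxiliary function $g(s) = P(s)/s$ on $(0,\infty)$, whose behaviour is exactly what the definition of the critical speed describes, and to read off each assertion from its derivative together with the convexity of $P$. Since $P$ is differentiable, so is $g$, and a direct computation gives
\begin{equation*}
g'(s) = \frac{s\,P'(s) - P(s)}{s^{2}}.
\end{equation*}
Hence the sign of $g'(s)$ agrees with the sign of $s\,P'(s) - P(s)$, which is precisely what links $P'(s)$ to the ratio $P(s)/s$.

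First I would establish the inequality involving $P'(s)$. By assumption $g$ is strictly decreasing on $(0, s*)$, so $g'(s) \le 0$ there; moreover $P''(s) > 0$ prevents $g'$ from vanishing on any subinterval, so in fact $g'(s) < 0$ for $s < s*$. From the formula above this reads $s\,P'(s) < P(s)$, i.e. $P'(s) < P(s)/s$, which is the first claimed inequality. For the critical speed itself, note that $s*$ is an interior minimizer of the differentiable function $g$, so the first-order condition $g'(s*) = 0$ holds; by the same formula this is exactly $s*\,P'(s*) = P(s*)$, that is $P'(s*) = P(s*)/s*$, the claimed equality.

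Finally, for the comparison with the critical ratio $P(s*)/s*$ I would invoke convexity: since $P''(s) > 0$, the derivative $P'$ is strictly increasing, so for $s < s*$ we have $P'(s) < P'(s*) = P(s*)/s*$, the last step using the equality just proved. Together with the strict monotonicity of $g$ on $(0, s*)$, which places $g(s) = P(s)/s$ relative to $g(s*) = P(s*)/s*$, this orders all three quantities and completes the argument. The only delicate point is upgrading the weak inequality $g'(s) \le 0$ to a strict one, and this is exactly where the hypothesis $P''(s) > 0$ is needed rather than mere convexity; every other step is a routine consequence of the definition of $s*$ and the first-order optimality condition.
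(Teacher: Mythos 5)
Your computation of $g'(s) = \frac{sP'(s)-P(s)}{s^{2}}$, the sign analysis giving $P'(s) < \frac{P(s)}{s}$ for $s < s*$, and the first-order condition $g'(s*)=0$ giving $P'(s*) = \frac{P(s*)}{s*}$ are exactly the paper's argument, and your additional convexity step ($P'$ strictly increasing, hence $P'(s) < P'(s*) = \frac{P(s*)}{s*}$ for $s < s*$) is correct and is in fact the property the paper actually relies on later (in Lemmas \ref{lemma2}, \ref{lemma6} and \ref{lemma9}), even though its own one-line proof never addresses it.

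The problem is your closing claim that the strict monotonicity of $g$ on $(0,s*)$ ``orders all three quantities'' as in the statement. It does order them, but in the opposite direction from what the lemma asserts: if $g(s)=\frac{P(s)}{s}$ is strictly decreasing on $(0,s*)$ with its minimum at $s*$, then for $s<s*$ you get $\frac{P(s)}{s} > \frac{P(s*)}{s*}$, not $<$. So the chain your argument genuinely establishes is $P'(s) < \frac{P(s*)}{s*} < \frac{P(s)}{s}$, which contradicts the middle inequality as displayed in the lemma. This is really a defect of the statement (the paper's own proof also only establishes $P'(s)<\frac{P(s)}{s}$ and the equality at $s*$, and everything used downstream is $P'(s)<\frac{P(s*)}{s*}$ together with $P'(s*)=\frac{P(s*)}{s*}$, both of which you prove), but a careful write-up should flag that the asserted ordering of $\frac{P(s)}{s}$ against $\frac{P(s*)}{s*}$ is reversed rather than waving at monotonicity as if it delivered the printed inequality.
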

\begin{proof}
The derivative of the function $\frac{P(s)}{s}$ is $\frac {sP'(s)-P(s)} {s^2}$.
Since $\frac{P(s)}{s}$ decreases for $s < s*$, we have $ \frac {sP'(s)-P(s)} {s^2}\le 0$ for $s \le s*$ with equality only when $s = s*$. 
\qed
\end{proof}

Given the function $P(s)$, a non-negative number $C$, and an instance $I_{p} $ of the {\it partition problem}, i.e., the integers $a_1, a_2, \ldots, a_n$, let's define the parameters which would
be used to construct an instance $I_{S}$ of the scheduling problem. \\ \\
$R(s) = P(s) - \frac {P(s*)} {s*} s$.  \\ \\
$L_{i} = e - f a_{i} $ where $ e = C/R(d),  d = (1-\epsilon) s*,  0 < \epsilon <1/2$, and 
$f =  \frac {C} {P(0) a_{max}} $.\\ \\
$l_{i} = d L_{i} - \frac{a_{i}} {k} $ where $ k  = \frac {- R'(d)} {f R(d)} $. \\ \\
$ B = \frac{\sum\limits_{i=1}^n a_{i}} {2k} $. \\ \\
The structure of $I_S$ is the same as the one used by Albers et. al. \cite{albers}.
The job set in $I_S$ is partitioned into three levels. In level $0$, there is only one job having a processing volume equal to $B$. Level $1$ comprises of $n$ jobs; the $i$-th job has a processing volume $l_{i}$ and 
$L_{i}$ time units to process it. There are $n+1$ jobs in level $2$, with each job having a processing volume of $\delta s*$ and $\delta > 0$ time units to process it, thereby making the density of each of these jobs equal to $s*$.  

In the rest of this section, we establish a few lemmas that would be useful in our proof of 
NP-hardness.

\begin{lemma}
\label{lemma2}
$R(d) < \epsilon P(0)$.
\end{lemma}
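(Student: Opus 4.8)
The plan is to exploit the convexity of $R$ together with the two endpoint values $R(0)$ and $R(s*)$. First I would record the two evaluations that anchor the argument: directly from the definition $R(s) = P(s) - \frac{P(s*)}{s*}s$ we get $R(0) = P(0)$ and $R(s*) = P(s*) - \frac{P(s*)}{s*}s* = 0$. These are the only two values of $R$ the proof will need.

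Next I would observe that $R$ is strictly convex on $[0,s*]$. Indeed $R''(s) = P''(s)$, and the standing hypothesis guarantees $P''(s) > 0$ for all $s$, so $R'' > 0$ everywhere. Strict convexity is precisely what will turn the bound into a strict inequality.

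The key step is to write $d$ as a convex combination of the two endpoints $0$ and $s*$. Since $d = (1-\epsilon)s*$ with $0 < \epsilon < 1/2$, we have $d = \epsilon\cdot 0 + (1-\epsilon)\cdot s*$, and the coefficient $\epsilon$ lies strictly in $(0,1)$, so $d$ is an interior point of the segment $[0,s*]$. Applying strict convexity (Jensen's inequality for two points) then yields
\[
R(d) < \epsilon\, R(0) + (1-\epsilon)\, R(s*) = \epsilon P(0) + (1-\epsilon)\cdot 0 = \epsilon P(0),
\]
which is exactly the claim. Geometrically this merely says that on $(0,s*)$ the graph of the convex function $R$ lies strictly below the chord joining $(0,P(0))$ and $(s*,0)$, and the height of that chord above $d$ is precisely $\epsilon P(0)$.

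I do not anticipate a genuine obstacle here; the only points requiring care are that $0$ and $s*$ both lie in the region where $R'' > 0$, so that strict convexity (and hence the strict inequality) is legitimate, and that $\epsilon \in (0,1)$ places $d$ strictly between the endpoints rather than at one of them. It is worth noting that the sharper hypothesis $\epsilon < 1/2$ is not actually used for this lemma, since only $0 < \epsilon < 1$ enters the argument; that extra slack is presumably reserved for the later lemmas.
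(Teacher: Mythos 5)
Your proof is correct and follows essentially the same route as the paper: both arguments use $R(0)=P(0)$, $R(s*)=0$, the strict convexity $R''=P''>0$, and the chord inequality $R(d) < \epsilon R(0) + (1-\epsilon)R(s*)$ at the interior point $d=(1-\epsilon)s*$. Your remark that only $\epsilon\in(0,1)$ is needed here is accurate; the paper likewise uses the restriction $\epsilon<1/2$ only in later lemmas.
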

\begin{proof}
Since $R(s) = P(s) - \frac{P(s*)}{s*}s$, we obtain $R'(s) = P'(s) -  \frac{P(s*)}{s*}$. We also note that
$R(0) = P(0) > 0 $ and $R(s*) = 0$. 
Furthermore, we obtain from Lemma \ref{lemma1} that 
$R'(s*) = P'(s*) - \frac{P(s*)}{s*} = 0$,  and  
$R'(s) = P'(s) - \frac{P(s*)}{s*} < 0$ for $s< s*$. 

Along with the properties established above, $ R''(s) = P''(s) > 0 $ implies that the following relationship holds.
\[R(d) < (1-\frac{d}{s*}) R(0) + \frac{d}{s*} R(s*) \Rightarrow  R(d) < \epsilon P(0).  \]
\qed
\end{proof}

\begin{lemma}
\label{lemma3}
$\frac{R(d)}{|R'(d)|} < \epsilon s*.$
\end{lemma}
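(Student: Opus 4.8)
The plan is to exploit the strict convexity of $R$ together with two facts already recorded in the proof of Lemma \ref{lemma2}: $R(s*) = 0$ and $R'(s) < 0$ for $s < s*$. First I would strip the target inequality of its absolute value and fraction. Since $d = (1-\epsilon)s* < s*$, we have $R'(d) < 0$, so $|R'(d)| = -R'(d) > 0$; also $s* - d = \epsilon s*$. Hence the claim $\frac{R(d)}{|R'(d)|} < \epsilon s*$ is equivalent to $R(d) < -R'(d)\,(s* - d)$, i.e. to $R(d) + R'(d)\,(s* - d) < 0$.

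The key step is to read the left-hand side as the value at $s*$ of the tangent line to $R$ at the point $d$. Because $R''(s) = P''(s) > 0$ everywhere, $R$ is strictly convex and therefore lies strictly above each of its tangent lines away from the point of tangency. Applying this support-line inequality to the pair $(d, s*)$ with $d \neq s*$ gives $R(s*) > R(d) + R'(d)\,(s* - d)$. Substituting the known value $R(s*) = 0$ yields exactly $R(d) + R'(d)\,(s* - d) < 0$, which is the reformulated claim, and dividing by $|R'(d)| > 0$ recovers $\frac{R(d)}{|R'(d)|} < \epsilon s*$.

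I expect the only delicate point to be the strictness of the final inequality. Ordinary convexity ($R'' \ge 0$) would give only the non-strict bound $R(d) \le |R'(d)|\,\epsilon s*$; to obtain $<$ one must genuinely use $R'' > 0$ so that the tangent-line inequality is strict for $d \neq s*$. An equivalent and fully explicit route is to integrate: writing $R(s*) - R(d) = \int_{d}^{s*} R'(t)\,dt$ and using that $R'$ is strictly increasing (so $R'(t) > R'(d)$ for $t > d$) gives $-R(d) = R(s*) - R(d) > R'(d)\,(s* - d)$, which is again the claim after recalling $R'(d) < 0$ and $s* - d = \epsilon s*$. Either phrasing closes the proof; the tangent-line version is the shortest.
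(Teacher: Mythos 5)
Your proof is correct and is essentially the rigorous version of the paper's own argument: the paper simply reads off Figure 2 that $|R'(d)| > \frac{R(d)}{s*-d}$ (the tangent at $d$ is steeper than the chord $ab$ joining $(d,R(d))$ to $(s*,0)$), which is exactly your tangent-line inequality $R(s*) > R(d) + R'(d)(s*-d)$ rearranged using $R(s*)=0$. You supply the justification via strict convexity ($R''=P''>0$) that the paper leaves implicit, including the strictness of the inequality.
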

\begin{proof}
It can be easily seen from Figure \ref{figure2} that $|R'(d)|>$ $\frac{R(d)}{s* - d}$ $=$ slope of line 
$ab$. Since $d = (1-\epsilon) s*$, it follows that $\frac{R(d)}{|R'(d)|} < \epsilon s*$.
\qed
\end{proof}

\begin{figure}[t]
\label{figure2}
\centerline{{\resizebox*{3in}{2in}
{\includegraphics{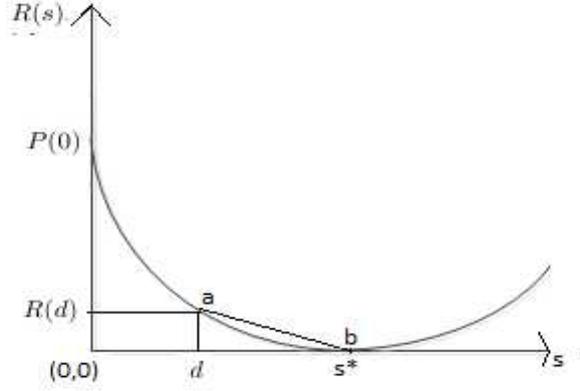}}
}}
\caption{Plot of $s$ vs $R(s)$} 
\end{figure}
We would now show that our choices of $l_{i}$ and $L_{i}$ satisfy the trivial constraints $l_{i}, L_{i} > 0$.   We would also that the density of all the intervals except those corresponding to level $2$ jobs are strictly less than $s*$.

\begin{lemma}
\label{lemma4}
$L_{i}, l_{i} > 0 $ for all $i$.
\end{lemma}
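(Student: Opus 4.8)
The plan is to prove the two inequalities separately, and in each case to reduce to the worst case $a_i = a_{max}$, since both $L_i = e - f a_i$ and $l_i$ are decreasing functions of $a_i$ (the latter because $l_i = d L_i - a_i/k$ with $d, k > 0$). For the easier bound $L_i > 0$, I would substitute the definitions $e = C/R(d)$ and $f = C/(P(0) a_{max})$ and evaluate at $a_i = a_{max}$, obtaining $L_i = C\bigl(1/R(d) - 1/P(0)\bigr)$. Lemma \ref{lemma2} gives $R(d) < \epsilon P(0) < P(0)$ (using $\epsilon < 1/2$), so $1/R(d) > 1/P(0)$ and hence $L_i > 0$ for every $i$.

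The substantive part is $l_i > 0$. First I would record that $R'(d) < 0$ for $d < s*$ (established in the proof of Lemma \ref{lemma2}), so $k = |R'(d)|/(f R(d)) > 0$ and $a_i/k = a_i f R(d)/|R'(d)|$. Substituting this and the definitions of $e$ and $f$ into $l_i = d L_i - a_i/k$ and factoring out $C$, the claim $l_i > 0$ reduces, at the worst case $a_i = a_{max}$, to showing that
\[
\frac{d}{R(d)} \;>\; \frac{d}{P(0)} + \frac{R(d)}{P(0)\,|R'(d)|}.
\]
Here I would bound the left-hand side from below using $R(d) < \epsilon P(0)$ (Lemma \ref{lemma2}), giving $\frac{d}{R(d)} > \frac{(1-\epsilon)s*}{\epsilon P(0)}$, and bound the right-hand side from above using $\frac{R(d)}{|R'(d)|} < \epsilon s*$ (Lemma \ref{lemma3}) together with $d = (1-\epsilon)s*$, giving a total of $\frac{(1-\epsilon)s*}{P(0)} + \frac{\epsilon s*}{P(0)} = \frac{s*}{P(0)}$. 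The desired inequality then follows from $\frac{(1-\epsilon)s*}{\epsilon P(0)} \ge \frac{s*}{P(0)}$, which is exactly the condition $\frac{1-\epsilon}{\epsilon} \ge 1$, i.e.\ $\epsilon \le 1/2$.

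The main obstacle is this last step for $l_i$: the two crude-looking estimates from Lemmas \ref{lemma2} and \ref{lemma3} only combine to give a positive quantity because the hypothesis $\epsilon < 1/2$ makes the factor $(1-\epsilon)/\epsilon$ strictly exceed $1$, which is precisely what turns the lower bound on the left-hand side into something larger than the upper bound on the right-hand side. In other words, the role of the constraint $0 < \epsilon < 1/2$ built into the reduction is to guarantee exactly this margin, and the whole difficulty of the lemma is funneled into verifying that the chosen parameters respect it; the $L_i$ bound, by contrast, is essentially immediate from Lemma \ref{lemma2} alone.
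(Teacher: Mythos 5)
Your proof is correct and takes essentially the same route as the paper: both parts reduce to the worst case $a_i = a_{max}$, the bound $L_i>0$ comes straight from Lemma \ref{lemma2}, and the bound $l_i>0$ combines Lemma \ref{lemma2} and Lemma \ref{lemma3} with the constraint $\epsilon < 1/2$ in the same way the paper does. Your displayed inequality is just an equivalent algebraic rearrangement of the paper's comparison $d > \frac{1}{k\left(\frac{e}{a_{max}}-f\right)}$, and the two estimates close for the identical reason.
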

\begin{proof}
We first prove that $L_{i} > 0$ for all $i$. Note that
\[  L_{i} > 0 \Leftrightarrow e - f a_{i} > 0 \Leftrightarrow e > f a_{i}. \]
Since $a_{max} > a_{i} $, it suffices to show that $ e > f a_{max} $. As shown below, it can be easily seen using Lemma \ref{lemma2}, $C>0$ and $0 < \epsilon < \frac{1}{2}$.
\[ e > f a_{max} \Leftrightarrow \frac{C}{R(d)} > \frac{C}{P(0)} \Leftrightarrow R(d) < P(0). \] 

\noindent In order to show that $l_{i} > 0 $, we observe that
\[ l_{i} > 0 \Leftrightarrow d L_{i} > \frac {a_{i}}{k} \Leftrightarrow d (e - f a_{i}) > \frac {a_{i}} {k} \Leftrightarrow  d > \frac {a_{i}} {k (e-fa_{i})} \Leftrightarrow d > \frac {1} { k (\frac{e}{a_{i}} - f)}. \]
Since $ a_{max} > a_{i} $, it suffices to show that $d > \frac{1}{k\left(\frac{e}{a_{max}} - f\right)}$.
We show it below using Lemma \ref{lemma2} and Lemma \ref{lemma3}.
\begin{eqnarray*}
\frac{1}{k\left(\frac{e}{a_{max}} - f\right)} & = & \frac{f R(d)} {|R'(d)| (\frac{C}{R(d) a_{max}} -f)}  \\
& = & \frac{R(d)} {|R'(d)| (\frac{C}{f R(d) a_{max}} - 1)}   \\
& = & \frac{R(d) R'(d)} {|R'(d)| (P(0) - R(d))} \\
& = & \frac{R(d)}{|R'(d)|} \cdot \frac{1}{(\frac{P(0)}{R(d)} - 1)} \\
& < & \epsilon s*  (\frac{1}{\frac{1}{\epsilon} - 1 }) \\
& = & \frac{\epsilon^2  s*}{1 - \epsilon} \\
& < & (1 - \epsilon ) s* = d.
\end{eqnarray*}
\noindent Note that the last inequality follows from our choice of $0 < \epsilon < \frac{1}{2}$. 
\qed
\end{proof}

\begin{figure}[t]
\centerline{{\resizebox*{3in}{2in}
{\includegraphics{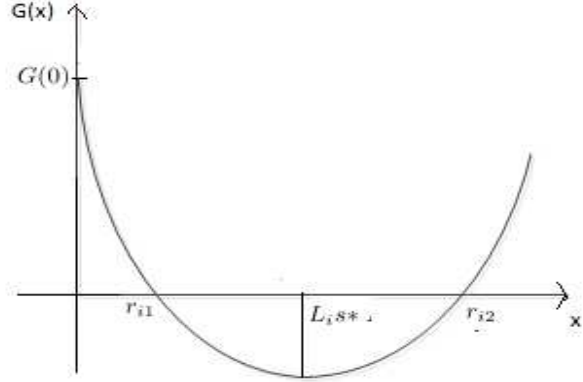}}
}}
\caption{Plot of $x$ vs $G(x)$} 
\end{figure}

\begin{lemma}
\label{lemma5}
The density of all the intervals except those corresponding to level $2$ jobs are strictly less than $s*$.
\end{lemma}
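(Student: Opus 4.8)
The plan is to reduce the statement to a finite case check over the ``block intervals'' of $I_S$ and then bound each type of interval by a length-weighted average argument. Recall that the level $1$ and level $2$ windows tile the horizon, with the two levels interleaved so that any two level $2$ blocks are separated by at least one level $1$ block (there being $n+1$ level $2$ jobs and $n$ level $1$ jobs, all of positive length by Lemma~\ref{lemma4}), while the single level $0$ job has window equal to the whole horizon $[0,T]$ with $T=(n+1)\delta+\sum_i L_i$. The first step is the standard observation that the maximum density over all intervals is attained by an interval whose left endpoint is a release time and whose right endpoint is a deadline: shrinking any interval to the tightest interval enclosing the same set of completely-contained jobs cannot decrease its density. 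Since every release time and deadline is a block boundary, it therefore suffices to bound the density of every \emph{union of consecutive blocks}, to note that the level $0$ job is completely enclosed only by the full horizon, and to dispose of the one leftover from the shrinking step, namely an interval that strictly contains a single level $2$ block and no other complete job: its density is already $<s*$ because its length exceeds $\delta$ while it encloses workload only $\delta s*$.

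Next I would record the two elementary densities. A level $2$ block has density exactly $s*$ by construction. For a level $1$ block, $l_i=dL_i-\frac{a_i}{k}<dL_i$, since $a_i>0$ and $k=\frac{-R'(d)}{fR(d)}>0$ (here $-R'(d)>0$ because $R'(s)<0$ for $s<s*$, and $f,R(d)>0$), so its density is $\frac{l_i}{L_i}<d=(1-\epsilon)s*<s*$.

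The core step is the weighted-average bound for a union $U$ of consecutive blocks that is not the full horizon, so that the level $0$ job is not enclosed. Writing $S$ for the set of level $1$ blocks in $U$ and $\Delta_2$ for the total length of the level $2$ blocks in $U$, the density of $U$ equals $\frac{\sum_{i\in S} l_i+\Delta_2 s*}{\sum_{i\in S}L_i+\Delta_2}$; using $\sum_{i\in S} l_i<d\sum_{i\in S}L_i<s*\sum_{i\in S}L_i$, the numerator is strictly less than $s*\bigl(\sum_{i\in S}L_i+\Delta_2\bigr)$, so the density is $<s*$ whenever $S\neq\emptyset$. The interleaving guarantees that any $U$ containing two or more level $2$ blocks also contains a level $1$ block, so the only block-union interval with $S=\emptyset$ is a single level $2$ block, which is precisely the excluded case. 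Hence every interval other than a single level $2$ block, except possibly the full horizon, has density $<s*$.

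Finally I would treat the full horizon $[0,T]$, the one interval enclosing the level $0$ job; this is where the particular value of $B$ enters and where I expect the only genuine arithmetic to lie. Its density is $<s*$ iff $B+\sum_{i=1}^n l_i<s*\sum_{i=1}^n L_i$; substituting $\sum_i l_i=d\sum_i L_i-\frac{\sum_i a_i}{k}$ and $B=\frac{\sum_i a_i}{2k}$ reduces this to $-\frac{\sum_i a_i}{2k}<(s*-d)\sum_i L_i$, which holds since the left side is negative and the right side positive. The main obstacle is thus not any single inequality but making the reduction to block-union intervals airtight (so that arbitrary real intervals are genuinely dominated) together with confirming that the chosen $B$ is small enough that even the global interval stays strictly below the critical density; the interleaving of the two levels is exactly what prevents any multi-block all-level-$2$ interval from reaching density $s*$.
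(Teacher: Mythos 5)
Your proof is correct and follows essentially the same route as the paper's: bound the level $1$ block density by $l_i/L_i<d<s*$, handle unions of blocks by a length-weighted average, and verify the full horizon by the explicit computation with $B=\frac{\sum_i a_i}{2k}$ reducing to $d<s*$. The only difference is that you make explicit the reduction of arbitrary intervals to unions of consecutive blocks and the interleaving of the two levels, which the paper leaves implicit in one sentence.
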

\begin{proof}
Let's first consider the intervals corresponding to level $1$ jobs. The density of such an interval is 
$\frac{ l_{i}} {L_{i}} $. Observe that 
\[  \frac{l_{i}} {L_{i}} < s* \Leftrightarrow \frac {d L_{i} - \frac {a_{i}}{k}} { L_{i}} > s*  \Leftrightarrow d < s* + \frac{a_{i}}{k L_{i}}. \] 
Since $d < s*$,  the density of an interval corresponding to a Level $1$ job is strictly less than $s*$. This also proves that the density of any interval corresponding to the union of a proper subset of the level $1$ and level $2$ jobs is less than $s*$, since such jobs are non-overlapping and the density of any interval corresponding a level $2$ job is exactly $s*$.   

\noindent Finally, we consider the interval that starts from the first arrival of the level $0$ job and lasts till it's deadline. The density of this interval is 
$\frac {(n+1) \delta s* + B + \sum_{i} l_{i} } {(n+1) \delta + \sum_{i} L_{i} } $. This quantity is less than 
$s*$ since
\[  \frac {B + \sum_{i} l_{i} } { \sum_{i} L_{i} } < s* \Leftrightarrow \frac { \sum_{i} a_{i} } {2k} + d \sum_{i} L_{i} - \sum_{i} \frac {a_{i}} {k} < s* \sum_{i} L_{i}  \Leftrightarrow d < s* + \frac { \sum_{i} a_{i}} { 2k \sum_{i} L_{i} }. \]
The last inequality is true since $d < s* $.
\qed
\end{proof}

Let us introduce the functions $F(x) = \frac{P(s*)}{s*} x + C $ and $H_{i}(x) = P(\frac{x}{L_{i}}) L_{i}$, and establish some of their properties.

\begin{lemma}
\label{lemma6}
$ F(x) $ and $ H_{i}(x) $ intersect at two different points for any $i$.
\end{lemma}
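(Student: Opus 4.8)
The plan is to study the difference $G_i(x) = H_i(x) - F(x)$ and show it has exactly two zeros on the relevant domain $x \ge 0$. The key simplification is the substitution $s = x/L_i$. Since $H_i(x) = P(x/L_i)L_i$ and $F(x) = \frac{P(s*)}{s*}x + C$, writing $x = sL_i$ gives
\[ G_i(x) = L_i\left(P(s) - \frac{P(s*)}{s*}s\right) - C = L_i R(s) - C. \]
Thus $F$ and $H_i$ meet exactly where $R(s) = C/L_i$, and since $x = sL_i$ with $L_i > 0$ (Lemma \ref{lemma4}), distinct values of $s$ give distinct values of $x$. So it suffices to show that the horizontal line at height $C/L_i$ meets the curve $y = R(s)$ in exactly two points.

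Next I would recall the shape of $R$ already established in the proof of Lemma \ref{lemma2}: $R$ is strictly convex (as $R'' = P'' > 0$), $R(0) = P(0)$, it is strictly decreasing on $[0, s*]$ with $R(s*) = 0$, and $R'(s*) = 0$ forces $R$ to be strictly increasing and unbounded on $[s*, \infty)$ (since $R'$ is increasing and hence positive beyond $s*$). Consequently $s*$ is the unique global minimiser of $R$, with minimum value $0$. Because $C/L_i > 0$, the line $y = C/L_i$ lies strictly above this minimum, so by the intermediate value theorem together with strict monotonicity on each branch it crosses $R$ exactly once in $(0, s*)$ and exactly once in $(s*, \infty)$ — provided $C/L_i$ is strictly smaller than the left-endpoint value $R(0) = P(0)$, which is what guarantees that the left crossing occurs at a positive $s$.

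The single inequality that needs checking is therefore $0 < C/L_i < P(0)$, equivalently $L_i P(0) > C$ (the lower bound $C/L_i > 0$ is immediate from $C > 0$ and $L_i > 0$). Substituting $L_i = \frac{C}{R(d)} - \frac{C a_i}{P(0)a_{max}}$ and dividing through by $C$, this reduces to $\frac{1}{R(d)} > \frac{1}{P(0)}\left(1 + \frac{a_i}{a_{max}}\right)$, i.e. $R(d) < \frac{P(0)}{1 + a_i/a_{max}}$. Since $a_i \le a_{max}$, the right-hand side is at least $P(0)/2$, so it is enough to show $R(d) < P(0)/2$; this follows at once from Lemma \ref{lemma2}, which gives $R(d) < \epsilon P(0)$, combined with the assumption $\epsilon < 1/2$.

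I expect the main obstacle to be the bookkeeping around the upper bound $C/L_i < P(0)$: obtaining a genuine two-point (rather than one-point or tangential) intersection hinges entirely on this height landing strictly between the minimum $0$ of $R$ and its boundary value $P(0)$, and it is precisely here that the particular choices of $e$, $f$ and the restriction $\epsilon < 1/2$ are used. The convexity-based counting of roots and the unboundedness of $R$ on $[s*,\infty)$ are routine once the shape of $R$ from Lemma \ref{lemma2} is in hand; the accompanying plot of $x$ versus $G(x)$ already depicts the resulting graph dipping below zero and returning, confirming the two-crossing picture.
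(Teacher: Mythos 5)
Your proof is correct and is essentially the paper's argument in a change of variables: setting $s = x/L_i$ turns $G(x)=H_i(x)-F(x)$ into $L_iR(s)-C$, and your shape analysis of $R$ (convex, decreasing to $0$ at $s*$, then increasing and unbounded) mirrors the paper's analysis of $G$ (positive at $0$, minimum $-C$ at $L_is*$, then increasing by convexity), while your key inequality $C/L_i < P(0)$ is exactly the paper's $G(0)>0$, both reducing to $R(d) < P(0)/2$ via Lemma \ref{lemma2} and $\epsilon < 1/2$. No gaps; the substitution is a clean streamlining but not a different method.
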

\begin{proof}
Consider $G(x) = H_{i}(x) - F(x)  =  P(\frac{x}{L_{i}}) L_{i} - \frac{P(s*)}{s*} x - C$. Note that
\begin{eqnarray*}
G(0) & = & P(0)L_{i} - C \\
& = & P(0)(e-fa_{i}) - C  \\
& = & P(0) (\frac{C}{R(d)} - \frac {C a_{i}}{P(0) a_{max}}) - C \\
& \ge & P(0) (\frac{C}{R(d)} - \frac{C}{P(0)}) - C \\
& = & \frac{C (P(0) - R(d))}{R(d)} - C \\
& > & C(\frac {P(0) (1-\epsilon)}{\epsilon P(0)}) - C \\
& = & C(\frac{1-2\epsilon}{\epsilon}) \\
& > & 0.
\end{eqnarray*}

\noindent The last-but-one inequality follows from Lemma \ref{lemma2}. The last inequality follows since 
$0 < \epsilon < 1/2$.

\noindent Next, we would show that $G(x)$ decreases for $x < L_{i} s*$, attains minimum at $x = L_{i}s*$  and then finally increases. Note that 
$G'(x) = L_{i} P'(\frac{x}{L_{i}}) \frac{1}{L_{i}} - \frac{P(s*)}{s*} = P'(\frac{x}{L_{i}}) -\frac{P(s*)}{s*}$.
The following inequalities follow easily from Lemma \ref{lemma1} and from the fact that $P''(s) > 0$.
\[ G'(x) \le 0 \Leftrightarrow P'(\frac{x}{L_{i}}) \le  \frac{P(s*)}{s*}  \Leftrightarrow P'(\frac{x}{L_{i}}) \le P'(s*) \Leftrightarrow x \le L_{i}s* .\]
By Lemma \ref{lemma1}, the inequalities above would be strict since $x < L_{i} s*$. \\ \\
We complete the proof by showing that $ G(L_{i} s*) = P(s*) L_{i} - P(s*) L_{i} - C = -C < 0 $. Since $G$ is a strictly convex function (note that $ G''(x) = \frac{1}{L_{i}} P''(\frac{x}{L_{i}}) $), it 
would eventually intersects the $x$-axis at some point $ x > L_{i} s* $.  
\qed
\end{proof}

\begin{lemma}
\label{lemma7}
$H'_{i}(l_{i}+\frac{a_{i}}{k}) = \frac {H_{i}(l_{i} + \frac{a_{i}}{k}) - F(l_{i})} {a_{i}/k} = P'(d)$.
\end{lemma}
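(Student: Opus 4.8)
The plan is to verify both equalities by direct substitution, relying only on the definitions of $l_i$, $L_i$, $k$, $F$, $H_i$, and $R$. The single identity that drives everything is $l_i + \frac{a_i}{k} = dL_i$, which is immediate from the definition $l_i = dL_i - \frac{a_i}{k}$. Thus the point at which we evaluate $H_i$ and its derivative is exactly $dL_i$, and this is the observation that makes all the subsequent cancellations possible.

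For the left equality, differentiating $H_i(x) = P\!\left(\frac{x}{L_i}\right)L_i$ by the chain rule gives $H_i'(x) = P'\!\left(\frac{x}{L_i}\right)$, the factor $L_i$ cancelling. Evaluating at $x = dL_i$ yields $H_i'(dL_i) = P'(d)$ immediately, so $H_i'\!\left(l_i + \frac{a_i}{k}\right) = P'(d)$ with no further work.

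The middle equality carries the real bookkeeping. I would compute the numerator $H_i(dL_i) - F(l_i) = P(d)L_i - \frac{P(s*)}{s*}l_i - C$ and substitute $l_i = dL_i - \frac{a_i}{k}$, collecting the $L_i$ terms into $L_i\bigl(P(d) - \frac{P(s*)}{s*}d\bigr) = L_i R(d)$. Using $L_i = \frac{C}{R(d)} - fa_i$ turns $L_i R(d)$ into $C - fa_i R(d)$, so the constant $C$ cancels and the numerator collapses to $-fa_i R(d) + \frac{P(s*)}{s*}\cdot\frac{a_i}{k}$. Dividing by $\frac{a_i}{k}$ leaves $-fR(d)\,k + \frac{P(s*)}{s*}$.

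The final step recognises $-fR(d)\,k = R'(d)$, which is the definition $k = \frac{-R'(d)}{fR(d)}$ rearranged, and then $R'(d) + \frac{P(s*)}{s*} = P'(d)$, since differentiating $R(s) = P(s) - \frac{P(s*)}{s*}s$ gives $R'(s) = P'(s) - \frac{P(s*)}{s*}$. This matches the left-hand side and closes the chain of equalities. I anticipate no genuine obstacle here: the argument is a chain of substitutions, and the only care required is to keep the definitions of $e$, $f$, and $k$ straight so that the cancellations of $C$ and of $R(d)$ go through cleanly.
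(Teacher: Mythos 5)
Your proposal is correct and follows essentially the same route as the paper: both prove the first equality by noting $l_i + \frac{a_i}{k} = dL_i$ and applying the chain rule, and both establish the second by the same chain of substitutions through $L_iR(d)$, $e = C/R(d)$, $k = \frac{-R'(d)}{fR(d)}$, and $R'(d) = P'(d) - \frac{P(s*)}{s*}$. No gaps.
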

\begin{proof}
It can be easily seen that $H'_{i}(l_{i}+\frac{a_{i}}{k}) = H'_{i}(dL_{i}) = P'(\frac{dL_{i}}{L_{i}}) = P'(d)$.
The following calculation also shows that $ \frac {H_{i}(l_{i} + \frac{a_{i}}{k}) - F(l_{i})} {a_{i}/k} = P'(d)$.
\begin{eqnarray*}
\frac {H_{i}(l_{i} + \frac{a_{i}}{k}) - F(l_{i})} {a_{i}/k} & = &\frac{k}{a_{i}} [H_{i}(dL_{i}) - F(l_{i})] \\
& = &\frac {k}{a_{i}} [P(d) L_{i} - \frac{P(s*)}{s*} l_{i} - C] \\
& = & \frac{k}{a_{i}} [P(d) L_{i} - \frac{P(s*)}{s*} d L_{i} +\frac{P(s*)}{s*} \frac{a_{i}}{k} - C] \\
& = & \frac{k}{a_{i}} R(d) L_{i} + \frac{P(s*)}{s*} - \frac{Ck}{a_{i}}	\\
& = & \frac{k}{a_{i}} R(d)e - \frac{k}{a_{i}} R(d) f a_{i} +\frac {P(s*)}{s*} - \frac{Ck}{a_{i}} \\
& = & \frac{k}{a_{i}} C +R'(d) + \frac{P(s*)}{s*} - \frac{Ck}{a_{i}} \\
& = & P'(d) - \frac{P(s*)}{s*} + \frac{P(s*)}{s*}  \\
& = & P'(d).
\end{eqnarray*}
\qed
\end{proof}

Let $r_{i1}$ and $r_{i2}$ be the two roots of the equation $ F(x) = H_{i}(x) $ such that $ r_{i1} < r_{i2}$.
We establish the following two lemmas.

\begin{lemma}
\label{lemma8}
$0 < l_{i} < r_{i1}$, where $r_{i1} $ is the first intersection of $H_{i}(x)$ and $F(x)$.
\end{lemma}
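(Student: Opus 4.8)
The plan is to prove the two inequalities $0 < l_i$ and $l_i < r_{i1}$ separately; the first is already established in Lemma \ref{lemma4}, so the real content is $l_i < r_{i1}$. Recall that $G(x) = H_i(x) - F(x)$, and that the proof of Lemma \ref{lemma6} shows $G(0) > 0$, that $G$ strictly decreases on $[0, L_i s*)$, attains its minimum value $G(L_i s*) = -C < 0$, and strictly increases thereafter. Consequently the two roots $r_{i1} < r_{i2}$ satisfy $r_{i1} < L_i s* < r_{i2}$, with $G(x) > 0$ precisely on $[0, r_{i1}) \cup (r_{i2}, \infty)$ and $G(x) < 0$ on $(r_{i1}, r_{i2})$.

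First I would locate $l_i$ relative to the minimizer $L_i s*$. Since $l_i = d L_i - a_i/k$ with $a_i/k > 0$ and $d = (1-\epsilon) s* < s*$, we immediately get $l_i < d L_i < s* L_i = L_i s*$. Hence $l_i$ lies strictly to the left of the minimum of $G$, so it cannot lie in the region $(r_{i2}, \infty)$ where $G$ is again positive.

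The key step is to show $G(l_i) > 0$, for which I would exploit Lemma \ref{lemma7} geometrically. Writing $l_i + a_i/k = d L_i$, Lemma \ref{lemma7} says that the chord joining $(l_i, F(l_i))$ to $(d L_i, H_i(d L_i))$ has slope $P'(d)$, which is exactly $H'_i(d L_i)$. Because this chord passes through the curve point $(d L_i, H_i(d L_i))$ and has the tangent slope there, it is precisely the tangent line to $H_i$ at $x = d L_i$; in particular $(l_i, F(l_i))$ lies on that tangent line. Since $H''_i(x) = \frac{1}{L_i} P''(x/L_i) > 0$, the curve $H_i$ is strictly convex and therefore lies strictly above its tangent line everywhere except at the point of tangency. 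As $l_i \ne d L_i$ (because $a_i/k > 0$), this gives $F(l_i) < H_i(l_i)$, i.e. $G(l_i) = H_i(l_i) - F(l_i) > 0$.

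Combining the two observations, $G(l_i) > 0$ forces $l_i \in [0, r_{i1}) \cup (r_{i2}, \infty)$, while $l_i < L_i s* < r_{i2}$ rules out the second option, so $l_i < r_{i1}$; together with $l_i > 0$ from Lemma \ref{lemma4} this yields $0 < l_i < r_{i1}$. I expect the main obstacle to be the verification that the chord of Lemma \ref{lemma7} is genuinely the tangent line at $d L_i$ rather than merely a secant of the same slope; this hinges on noticing that the chord shares the actual curve point $(d L_i, H_i(d L_i))$, after which strict convexity does the rest. A direct algebraic evaluation of $G(l_i)$ is possible but considerably messier, so the tangent-line argument is the route I would take.
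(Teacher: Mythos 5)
Your proposal is correct and follows essentially the same route as the paper: both establish $G(l_{i})>0$ by combining Lemma \ref{lemma7} with the strict convexity of $H_{i}$ (your ``curve lies above its tangent at $dL_{i}$'' is the same inequality as the paper's ``$H'_{i}(l_{i}+\frac{a_{i}}{k})$ exceeds the secant slope over $[l_{i},\, l_{i}+\frac{a_{i}}{k}]$''), and then conclude from $l_{i}<L_{i}s*$ and the shape of $G$ established in Lemma \ref{lemma6}. Your write-up merely makes the final localization step more explicit than the paper's one-line remark.
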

\begin{proof}
Since $ H_{i}(x)$ is strictly convex at every $x$, we obtain
\begin{eqnarray*}
H'_{i}(l_{i} + \frac{a_{i}}{k}) > \frac {H_{i}(l_{i}+\frac{a_{i}}{k}) - H_{i}(l_{i})} {a_{i}/k} 
& \Rightarrow & \frac {H_{i}(l_{i} + \frac{a_{i}}{k}) - F(l_{i})} {a_{i}/k} > \frac {H_{i}(l_{i}+\frac{a_{i}}{k}) - H_{i}(l_{i})} {a_{i}/k} \\
& \Rightarrow & -F(l_{i}) > - H_{i}(l_{i}) \\
& \Rightarrow & G(l_{i}) >0. 
\end{eqnarray*}
The lemma follows since $l_{i} < L_{i}s*$.
\qed
\end{proof}

\begin{lemma}
\label{lemma9}
$r_{i1} < (l_{i} + \frac{a_{i}}{k})< r_{i2}$.
\end{lemma}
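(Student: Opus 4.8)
The plan is to reduce both inequalities to the single statement $G(l_{i} + \frac{a_{i}}{k}) < 0$, where $G(x) = H_{i}(x) - F(x)$ is the strictly convex function introduced in Lemma~\ref{lemma6}. By Lemma~\ref{lemma6}, $G$ has exactly the two roots $r_{i1} < r_{i2}$; since $G$ is strictly convex and $G(0) > 0$, it is negative precisely on the open interval $(r_{i1}, r_{i2})$ and positive outside it. Hence establishing $G(l_{i} + \frac{a_{i}}{k}) < 0$ simultaneously yields $r_{i1} < l_{i} + \frac{a_{i}}{k}$ and $l_{i} + \frac{a_{i}}{k} < r_{i2}$, which is exactly the claim.

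First I would observe that the definition $l_{i} = d L_{i} - \frac{a_{i}}{k}$ gives $l_{i} + \frac{a_{i}}{k} = d L_{i}$, so I am really evaluating $G$ at $d L_{i}$. To compute $G(l_{i} + \frac{a_{i}}{k}) = H_{i}(l_{i} + \frac{a_{i}}{k}) - F(l_{i} + \frac{a_{i}}{k})$, I would invoke Lemma~\ref{lemma7}, which provides the identity $H_{i}(l_{i} + \frac{a_{i}}{k}) - F(l_{i}) = \frac{a_{i}}{k} P'(d)$. Since $F$ is linear with slope $\frac{P(s*)}{s*}$, I also have $F(l_{i} + \frac{a_{i}}{k}) = F(l_{i}) + \frac{a_{i}}{k} \cdot \frac{P(s*)}{s*}$. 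Subtracting these two relations collapses everything into the clean expression $G(l_{i} + \frac{a_{i}}{k}) = \frac{a_{i}}{k}\left(P'(d) - \frac{P(s*)}{s*}\right)$.

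It then remains only to check the sign. The factor $\frac{a_{i}}{k}$ is positive: $a_{i} > 0$, and $k = \frac{-R'(d)}{f R(d)} > 0$ because $R'(d) < 0$ (established in the proof of Lemma~\ref{lemma2}) while $f$ and $R(d)$ are positive. For the second factor, Lemma~\ref{lemma1} applied at $d < s*$ gives $P'(d) < \frac{P(d)}{d} < \frac{P(s*)}{s*}$, so $P'(d) - \frac{P(s*)}{s*} < 0$. Therefore $G(l_{i} + \frac{a_{i}}{k}) < 0$, which places $l_{i} + \frac{a_{i}}{k}$ strictly between the two roots and completes the proof.

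I would expect the only delicate point to be the setup rather than the computation: recognizing that the target quantity $l_{i} + \frac{a_{i}}{k}$ equals $d L_{i}$ and that the two-sided bound collapses to a single sign condition on $G$, after which Lemma~\ref{lemma7} does the heavy lifting. Everything past that is a short, routine sign check using Lemma~\ref{lemma1}.
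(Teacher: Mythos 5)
Your proof is correct, and it takes a somewhat different route from the paper's. The paper treats the two inequalities separately: for $r_{i1} < l_i + \frac{a_i}{k}$ it argues by contradiction, appealing to Figure~\ref{figure4} to say that $l_i + \frac{a_i}{k} \le r_{i1}$ would force the secant slope $\frac{H_i(l_i + a_i/k) - F(l_i)}{a_i/k}$ to be at least $\frac{P(s*)}{s*}$, which via Lemma~\ref{lemma7} contradicts $P'(d) < \frac{P(s*)}{s*}$; for the upper bound it uses the chain $l_i + \frac{a_i}{k} = dL_i < L_i s* < r_{i2}$, where $L_i s* < r_{i2}$ comes from the proof of Lemma~\ref{lemma6}. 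You instead collapse both bounds into the single sign condition $G(dL_i) < 0$ and compute $G(dL_i) = \frac{a_i}{k}\bigl(P'(d) - \frac{P(s*)}{s*}\bigr)$ explicitly by combining Lemma~\ref{lemma7} with the linearity of $F$. The underlying ingredients are identical (Lemma~\ref{lemma7} plus the inequality $P'(d) < \frac{P(s*)}{s*}$ from Lemma~\ref{lemma1}), but your version is more self-contained: it replaces the figure-based geometric step with a one-line algebraic identity, handles both inequalities at once, and only needs from Lemma~\ref{lemma6} the qualitative fact that the strictly convex $G$ is negative exactly on $(r_{i1}, r_{i2})$. The one small thing worth stating explicitly is that $R(d) > 0$ (which follows since $R$ is decreasing on $[0, s*]$ with $R(s*) = 0$ and $d < s*$), as you use it to conclude $k > 0$; the paper's upper-bound argument sidesteps this by not needing the sign of $\frac{a_i}{k}$ at that point, though positivity of $k$ is of course assumed throughout the construction.
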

\begin{proof}
Assume for the sake of contradiction that $ l_{i} + \frac{a_{i}}{k} \le r_{i1}$. It can be seen from 
Figure \ref{figure4} that it implies $\frac {H_{i}(l_{i} + \frac{a_{i}}{k}) - F(l_{i})} {a_{i}/k}  \ge \frac{P(s*)}{s*} $  $\Rightarrow P'(d) \ge \frac{P(s*)}{s*}$. However, this leads to a contradiction since
$d < s*$, and $P'(d) < \frac{P(s*)}{s*} $ by Lemma \ref{lemma1}. 
On the other hand, it's easy to see that $ l_{i} + \frac{a_{i}}{k} = d L_{i} < L_{i}s* < r_{i2} $ . 
\qed
\end{proof}

\begin{figure}[t]
\centerline{{\resizebox*{3in}{2in}
{\includegraphics{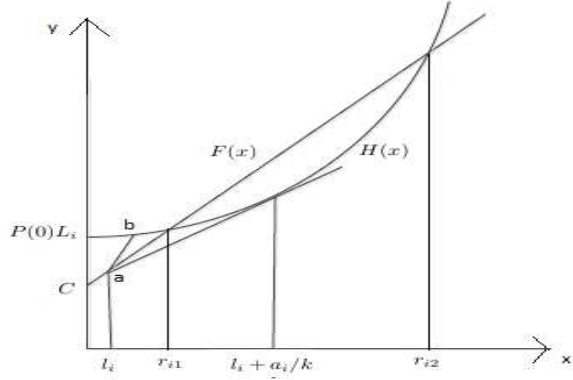}}
}}
\caption{Plot of $x$ vs $F(x)$ and $H(x)$}
\label{figure4} 
\end{figure}

\section{Proof of NP-hardness}  
In this section, we would complete the proof of the NP-hardness of the {\it speed scaling with power down} problem. We would be using the following result by Irani et al. \cite{irani} along with the results derived in the previous section.
\begin{lemma}
\label{lemma10}
\cite{irani} There exists an optimal solution of the ``speed scaling with a sleep state" problem that satisfies the following properties: 
\begin{itemize} 
\item A job $j$ must be scheduled at a constant speed $ s_{j}. $ \\
\item Suppose that the arrival time and the deadline of a job $j$ is $r_{j}$ and $d_{j}$, respectively. If 
another job $k$ is scheduled in the interval $[r_{j}, d_{j}]$, then $s_{k} \ge s_{j} $. \\
\item The jobs in the intervals having density at least $s*$ are scheduled according to the 
YDS algorithm \cite{yao}. The YDS algorithm is an iterative algorithm. In each iteration, an interval with the maximum density is identified and an {\it earliest-deadline-first} policy is used to construct a schedule for the jobs that lie completely in that interval. After an iteration, the YDS algorithm removes the jobs that lie completely in the maximum density interval corresponding to that iteration, and updates the arrival time
and deadline of any job that overlaps with
that interval.
\end{itemize}
\end{lemma}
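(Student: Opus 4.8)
The plan is to prove the three structural properties one after another, each by a local exchange argument powered by the strict convexity of $P$, and then to reduce the high-density part of any optimal schedule to the classical sleep-free speed scaling problem, for which the YDS algorithm is known to be optimal. The convex (speed) part of the argument is routine; the genuinely delicate point is the interaction with the sleep state, where the fixed wake-up cost $C$ introduces a non-convex, combinatorial element.

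First I would show that each job runs at a single constant speed. Fix an optimal schedule, let $T_j$ be the set of instants at which job $j$ is processed, and let $s(t)$ be the speed at time $t$; the volume constraint reads $\int_{T_j} s(t)\,dt = w_j$ and the energy spent on $j$ is $\int_{T_j} P(s(t))\,dt$. Flattening the speed on $T_j$ to its average $\bar s_j = w_j/|T_j|$ leaves the processed volume and the set $T_j$ (hence the entire active/sleep pattern) unchanged, while Jensen's inequality for the convex $P$ gives $\int_{T_j} P(s(t))\,dt \ge |T_j|\,P(\bar s_j)$. So the energy cannot increase, and I may henceforth speak of the constant speed $s_j$ of job $j$.

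Next I would prove the monotonicity property by contradiction. Suppose job $k$ runs at some instant $t_k \in [r_j,d_j]$ with $s_k < s_j$. Let $\tau_j = w_j/s_j$ and $\tau_k = w_k/s_k$ be the active durations of the two jobs, and consider transferring a tiny slot of length $\mathrm{d}\tau$ from $k$'s execution around $t_k$ to job $j$: job $j$ then runs over the set $T_j$ enlarged by this slot (feasible, since the slot lies in $[r_j,d_j]$) at the slightly smaller speed $w_j/(\tau_j+\mathrm{d}\tau)$, while $k$ runs over a smaller set (always feasible) at a slightly larger speed. Writing $g(\tau) = \tau\,P(w/\tau)$ one checks $g'(\tau) = P(s) - s\,P'(s) =: h(s)$ with $s = w/\tau$, and $h'(s) = -s\,P''(s) < 0$, so the first-order change in energy is $\mathrm{d}E = \big(h(s_j) - h(s_k)\big)\,\mathrm{d}\tau < 0$ because $h$ is strictly decreasing and $s_j > s_k$. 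This contradicts optimality and forces $s_k \ge s_j$. The crucial observation making this work is asymmetric feasibility: the faster job $j$ only needs to \emph{expand} into a slot already inside its own window, whereas the slower job $k$ merely \emph{contracts}, so no containment of $j$'s window inside $k$'s is ever required.

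Finally I would establish the YDS structure on the intervals of density at least $s*$, which is where I expect the main obstacle to lie. Here I would first argue that an optimal schedule never sleeps inside such an interval: the critical speed $s*$ minimizes the energy-per-unit-volume $P(s)/s$, so a region already forced to run at average speed at least $s*$ gains nothing from being interrupted, since inserting sleep would compress its work into still higher (more expensive) speeds and additionally pay the wake-up cost $C>0$. Once sleeping is ruled out on these intervals, their restriction is exactly an instance of the classical speed scaling problem without a sleep state, whose optimum is produced by the iterative maximum-density, earliest-deadline-first procedure of Yao, Demers and Shenker; the monotonicity property proved above certifies that the successively removed maximum-density intervals carry non-increasing speeds, so the YDS assignment is consistent with an optimal schedule. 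The real difficulty throughout is precisely this decoupling: because the wake-up cost $C$ makes the choice of when to sleep a discrete, non-convex decision, the heart of the argument is showing that such decisions are confined to the low-density ($<s*$) regions and leave the dense regions in the clean, purely convex regime handled by YDS.
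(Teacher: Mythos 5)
This lemma is not proved in the paper at all: it is imported verbatim from Irani, Shukla and Gupta \cite{irani} and used as a black box, so there is no in-paper argument to compare yours against. Judged on its own terms, your reconstruction of the first two properties is correct and standard: Jensen's inequality on the convex $P$ over the fixed execution set $T_j$ gives constant per-job speeds, and your slot-transfer computation with $h(s)=P(s)-sP'(s)$ and $h'(s)=-sP''(s)<0$ is exactly the right local exchange for the monotonicity property, including the key observation that only the faster job needs to expand into a slot already inside its own window while the slower job merely contracts.

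The third property is where your argument remains a plan rather than a proof, and it is also where the substance of the cited theorem lies. The claim that ``a region already forced to run at average speed at least $s*$ gains nothing from being interrupted'' still has to handle (i) jobs that only partially overlap the dense interval but may be executed inside it, so that the volume actually processed there is not determined by the interval's density alone; (ii) sleep periods that straddle the boundary of the dense interval, where un-sleeping inside the interval interacts with the transition cost $C$ paid outside it; and (iii) the recursive structure of YDS: after the maximum-density interval is peeled off, the remaining jobs' release times and deadlines are modified, and one must show that an optimal schedule is consistent with this peeling at every iteration, not just the first. Your appeal to the monotonicity property to certify that successive peels carry non-increasing speeds is the right idea but is asserted rather than argued. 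None of this affects the paper, which is entitled to cite the result; but as a standalone proof, your treatment of the YDS property is an outline of the difficulties rather than a resolution of them.
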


\begin{theorem}
\label{theorem1} An instance $ I_{p} $ of the partition problem admits a partition if and only if there exists a a feasible schedule for $ I_{s} $ with total energy consumption of at most  $(n+1) \delta P(s*) + \sum \limits_{i=1}^n F(l_{i}) + B P'(d)$. 
\end{theorem}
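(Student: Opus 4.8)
The plan is to reduce the whole scheduling instance to $n$ independent single-interval subproblems and to solve each of them exactly using Lemmas \ref{lemma6}--\ref{lemma9}. First I would fix an energy-minimal schedule; by Lemma \ref{lemma10} it may be assumed to have the stated structure. The level-$2$ jobs have density exactly $s*$, and by Lemma \ref{lemma5} every other interval has density strictly below $s*$; hence the YDS clause of Lemma \ref{lemma10} forces each level-$2$ job to run at speed $s*$ and to fill its own interval completely, contributing exactly $(n+1)\delta P(s*)$. Consequently all of the level-$0$ and level-$1$ work is confined to the $n$ intervals of length $L_i$, and the entire level-$0$ volume $B$ is split among them. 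Because a level-$2$ job lies between consecutive such intervals and forces the processor to be active at both of their endpoints, each $L_i$-interval becomes an \emph{independent} sleep-or-stay-awake decision; I charge each interval for the wake-up that re-activates the processor for the following level-$2$ job, and take the processor to be active at the global start.

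Write $x_i = l_i + y_i$ for the total work placed in the $i$-th level-$1$ interval, where $y_i \ge 0$ is its level-$0$ part and $\sum_i y_i = B$. For any schedule on that interval (charged also for a re-activation wake-up if it sleeps): if it never sleeps, convexity of $P$ and Jensen's inequality give energy at least $H_i(x_i)=P(x_i/L_i)L_i$; if it sleeps at least once, then $P(s)\ge \frac{P(s*)}{s*}s$ for all $s$ together with the charge $C$ give energy at least $\frac{P(s*)}{s*}x_i+C=F(x_i)$. Hence the interval costs at least $\min\{H_i(x_i),F(x_i)\}$, so that the total energy is at least $(n+1)\delta P(s*) + \sum_{i=1}^n \min\{H_i(x_i),F(x_i)\}$.

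The heart of the argument is the per-interval inequality. With $G=H_i-F$ as in Lemma \ref{lemma6}, a direct rearrangement gives
\[ \min\{H_i(l_i+y),\,F(l_i+y)\} - F(l_i) \;=\; \tfrac{P(s*)}{s*}\,y + \min\{0,\,G(l_i+y)\}, \]
and I claim this is at least $P'(d)\,y$ with equality exactly when $y\in\{0,\,a_i/k\}$. When $G(l_i+y)\ge 0$ the claim is $\frac{P(s*)}{s*}y\ge P'(d)y$, which holds by Lemma \ref{lemma1} (as $d<s*$) with equality only at $y=0$. When $G(l_i+y)<0$ the claim reduces to $H_i(l_i+y)-F(l_i)\ge P'(d)y$; the left-hand side is strictly convex in $y$, its derivative $H_i'(l_i+y)-P'(d)$ vanishes at $y=a_i/k$ by the first identity of Lemma \ref{lemma7}, and its value there is $0$ by the second identity of Lemma \ref{lemma7}, so its minimum is $0$, attained only at $y=a_i/k$. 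That $y=a_i/k$ really lies in the region $G<0$ is exactly Lemma \ref{lemma9}, while $\min\{H_i(l_i),F(l_i)\}=F(l_i)$ follows from Lemma \ref{lemma8}. Summing over $i$ and using $\sum_i y_i=B$, the total energy is at least $(n+1)\delta P(s*) + \sum_{i=1}^n F(l_i) + B\,P'(d)$, the claimed bound.

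Both directions now follow. For the ``only if'' direction, the energy-minimal schedule has energy at most that of the given feasible schedule, hence at most the bound; by the previous paragraph it is also at least the bound, so equality holds throughout and every $y_i\in\{0,a_i/k\}$. Setting $A'=\{i:y_i=a_i/k\}$ then gives $\sum_{i\in A'}a_i/k=\sum_i y_i=B$, i.e. $\sum_{i\in A'}a_i=kB=\tfrac12\sum_i a_i$, a partition. For the ``if'' direction, given a partition $A'$ I run each level-$2$ job at $s*$, run each interval $i\in A'$ continuously at speed $d$ (processing exactly $l_i+a_i/k=dL_i$, i.e. its level-$1$ job together with $a_i/k$ units of level $0$), and in each interval $i\notin A'$ process $l_i$ at speed $s*$ and sleep the rest; the level-$0$ volume used is $\sum_{i\in A'}a_i/k=B$, and by Lemma \ref{lemma7} the energy telescopes to $(n+1)\delta P(s*)+\sum_i F(l_i)+\sum_{i\in A'}\frac{a_i}{k}P'(d)$, which equals the bound. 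I expect the main obstacle to be the structural step of the first paragraph: justifying rigorously that the level-$0$ mass is confined to the $L_i$-intervals and freely splittable there, and that the boundary wake-ups can be charged interval-by-interval, since once this decoupling is in place the per-interval problem is completely pinned down by Lemmas \ref{lemma6}--\ref{lemma9}.
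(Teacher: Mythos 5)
Your proof is correct and follows essentially the same route as the paper: level-2 jobs are pinned at speed $s*$ via Lemmas \ref{lemma5} and \ref{lemma10}, the per-gap cost is bounded below by the lower envelope $\min\{F(x),H_i(x)\}$, and Lemmas \ref{lemma7}--\ref{lemma9} identify $P'(d)$ as the minimal marginal cost of level-0 work, attained only at $y_i\in\{0,a_i/k\}$. The only difference is organizational: you package the lower bound as a single per-interval inequality with an equality characterization and read off both directions from it, whereas the paper argues the reverse direction by a two-case split (some $b_i=a_i/k$ with $i\in A'$ versus none, using the equal-speed property from Lemma \ref{lemma10}); the underlying estimates are identical.
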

\begin{proof}
$(\Rightarrow) $ Let's first assume that $ I_{p} $ admits a partition and construct a feasible schedule of energy at most  $(n+1) \epsilon P(s*) + \sum \limits_{i=1}^n F(l_{i}) + B R_{min}$. We start with some notations.

Let $A'$ be the set of i corresponding to the  solution of Partition problem, i.e., $ \sum \limits_{i \in A'} a_{i} = \frac{\sum \limits_{i=1}^n a_{i}}{2}$. Let $ b_{i} $ denote the portion of the workload of Level 0 job scheduled in gap $ g_{i}$. It can be seen that $ \sum \limits_{i=1}^n b_{i} = B $. We set the $b_{i}$'s as follows: 
\begin{center}
$ b_{i} = \left\lbrace
\begin{array}{r l}
a_{i}/k, & \mbox{if }i\in A' \\
0, & \mbox{otherwise}.
\end{array}
\right. $
\end{center}
Our schedule executes any Level $2$ jobs with speed $s*$ between it's release time and deadline. This is feasible since the density of any such job is equal to $s*$. Therefore, a total workload of $ l_{i} + a_{i}/k = d L_{i}$ has to be scheduled in each gap $g_{i}$ corresponding to an $i \in A'$. We schedule both the jobs in gap $g_i$ with speed $d$. In the rest of the gaps, the Level $1$ jobs are scheduled at speed $s*$. The processor transitions to the sleep state at the completion of the job in such gaps, and wakes up at the release time of a Level $2$ job. Since the density of any interval corresponding to a Level $1$ job is less than $s*$, we get a feasible schedule. 

Let us calculate the total energy consumed by the jobs at every level. First of all, the total energy consumed by the Level 2 jobs is $(n+1) \delta P(s*) $. In the gaps corresponding to $ i \in A'$, we note that the jobs are proceeded at a speed $ d = (\frac { l_{i} + a_{i}/k} {L_{i}}) $ for $L_{i}$ units of time. The energy consumption in such a gap equals $ P(\frac { l_{i} + a_{i}/k} {L_{i}}) L_{i}$, which is the same as $ H_{i}(l_{i}+ \frac {a_{i}} {k})$. In a gap corresponding to $ i\notin A'$, a total $l_{i}$ units of workload  are scheduled at speed $ s* $  and then the processor transitions to sleep state. Therefore, the energy consumed is given by  
$P(s*)\frac {l_{i}}{s*} + C $, which is the same as $ F(l_{i})$. From lemma \ref{lemma7}, $ H_{i}(l_{i}+ \frac {a_{i}} {k})$ can be written as $ F(l_{i}) + P'(d) \frac {a_{i}}{k} $.   \\ \\
Let $ E_{0,1} $ denote the total Energy consumed by the Level $0$ and Level $1$ jobs. We obtain the following.
\begin {eqnarray*}
E_{0,1} & = & \sum \limits_{i \in A'} H_{i}(l_{i} + \frac{a_{i}} {k})   + \sum \limits_{i \notin A'} F(l_{i}) \\
&  = & \sum \limits_{i \in A'} ( F(l_{i}) + P'(d) \frac{a_{i}} {k}) + \sum \limits_{i \notin A'} F_{i}(l_{i})\\
& = & \sum \limits_{i = 1}^n F(l_{i}) + P'(d) \sum \limits_{i \in A'} \frac {a_{i}}{k} \\
& =  & \sum \limits_{i = 1}^n F(l_{i}) + P'(d) B.
\end{eqnarray*}  \\
The last equality follows since $ \sum \limits_{i = 1}^n b_{i} = \sum \limits_{i \in A'} a_{i}/k = (\sum \limits_{i =1}^n a_{i}) /2k = B $.  Hence, we get a feasible schedule whose total energy consumption is  $ (n+1) \delta P(s*) + \sum \limits_{i = 1}^n F(l_{i}) + P'(d) B $.

\begin{figure}[t]
\centerline{{\resizebox*{2.6in}{2in}
{\includegraphics{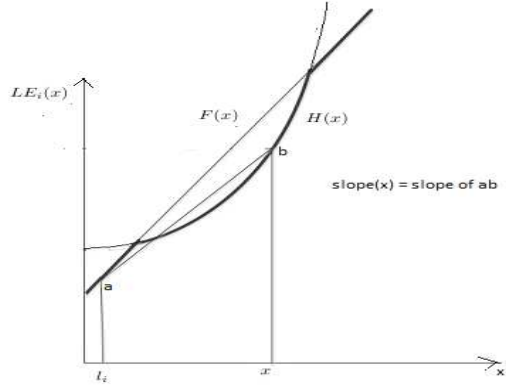}}
}}
\caption{Plot of $x$ vs the lower envelope function $LE_{i}(x)$}
\label{figure5} 
\end{figure}

($ \Leftarrow $) In the reverse direction of the proof, we assume that $I_{p}$ doesn't admits a partition and  show that the energy consumption in any feasible schedule is strictly greater than  $(n+1) \delta P(s*) + \sum \limits_{i=1}^n F(l_{i}) + B P'(d)$. 

Let $ LE_{i}(x) = min \{F(x),H_{i}(x) \}$ denote the lower envelope of the functions $ F(x) $ and $ H_{i}(x)$, represented by solid curve in Figure \ref{figure5}.  Let slope(x) denote the slope of the line joining $ (l_{i}, LE_{i}(l_{i}))$ and $ (x, LE_{i}(x)) $. For $x \ge l_{i} $, $ LE_{i}(x) $ can be written as 
$L E_{i}(x) = LE_{i}(l_{i}) + ( LE_{i}(x) - LE_{i}(l_{i}))  =  F(l_{i}) + (slope(x))* (x - l_{i})$.
We note that the slope(x) is minimum at $ x = l_{i} + \frac{a_{i}} {k} $  and the minimum value is $ H'(l_{i} + \frac {a_{i}} {k}) = P'(d)$ (by Lemma \ref{lemma7}) which is  independent of $i$. 

Consider an Optimal schedule $ S $ satisfying the properties of lemma \ref{lemma10} and let $ b_{1},b_{2},....,b_{n} $ units of workload of Level 0 job be scheduled in the gaps $ g_{1},g_{2},.....,g_{n}$, respectively. 
Let $ A' = \{ i | r_{i1} \le l_{i} + b_{i} \le r_{i2} \} $. \\ \\
\textbf{Case 1.} $ b_{i} = \frac {a_{i}}{k} $ for some $ i \in A' $  \\ \\
 Since the workload $ l_{i} + b_{i} $  is greater than $r_{i1} $ and less than $r_{i2} $, it is beneficial to schedule it at the speed $ (l_{i} + b_{i})/L_{i} $ (rather than to schedule it with the speed $s*$) and then transition to sleep state. From Lemma \ref{lemma10}, it follows that the ratio $ (l_{i} + b_{i})/L_{i} $ must be the same for all $ i \in A' $ in the schedule $ S$. 
Take $ i \in A' $ corresponding to $ b_{i} = \frac {a_{i}} {k} $. We show below that $ b_{j} $ must also be equal to $ a_{j}/k $  for all $  j \in A' $ in an optimal schedule. 
\begin{eqnarray*} 
\frac{(b_{i} + l_{i})}{L_{i}} = \frac  {(b_{j} + l_{j})}{L_{j}} 
& \Rightarrow & \frac {a_{i}/k + dL_{i} - a_{i}/k} { L_{i}} = \frac {b_{j}+ dL_{j} - a_{j}/k} { L_{j}}  \\
& \Rightarrow & d = d + \frac { b_{j} - a_{j}/k } {L_{j}} \\
& \Rightarrow & b_{j} = a_{j}/k.
\end{eqnarray*} 

Lemma \ref{lemma10} says that all the intervals having density greater than or equal to s* must be scheduled according to YDS in the schedule $S$. Also, Lemma \ref{lemma1} tells that all the  intervals except those corresponding to Level 2 jobs are having density less than s*. Therefore, in the schedule $S$,  all Level 2 jobs must be scheduled at s*. Thus, the total energy consumed by the Level 2 jobs is $ (n+1) \delta P(s*) $. 

Let us again denote the total energy required by the Level 0 and level 1 jobs as $E_{0,1}$.  In a gap corresponding to $i \notin A'$, it is optimal to schedule the job at speed $s*$ and then transition to sleep state than scheduling at the speed  $ (l_{i} + b_{i})/L_{i} $. When this is feasible, the energy consumption in the gap would be given by $ LE(l_{i} +b_{i}) $. When it's not (i.e., if $ (l_{i} + b_{i})/L_{i} > s*$), the energy consumption in the gap would be greater than $ LE_{i}(l_{i} +b_{i}) $. Therefore, we obtain the following lower bound on $E_{0,1}$.
\begin{align*}
 E_{0,1} & \ge  \sum \limits_{i \in A'} LE_{i}(l_{i} + \frac{a_{i}} {k})   + \sum \limits_{i \notin A'} LE_{i}(l_{i} + b_{i}) \\
& =  \sum \limits_{i \in A'} (F(l_{i}) + P'(d) \frac{a_{i}} {k})   + \sum \limits_{i \notin A'} (F(l_{i}) + \frac{P(s*)}{s*} b_{i}) \\
& = \sum \limits_{i = 1}^n F(l_{i}) + P'(d) \sum \limits_{i \in A'} a_{i}/k + \frac {P(s*)}{s*} (B - \sum \limits_{i \in A'}b_{i}).  
\end{align*}

If $\sum \limits_{i \in A'}b_{i} = B$, it implies that $\sum \limits_{i \in A'} \frac{a_{i}}{k}  = \frac {\sum \limits_{i = 1}^n a_{i}} {2k}$, which contradicts our assumption that a solution of the partition problem does not exist. Therefore, $\sum \limits_{i \in A'}b_{i} < B$, which implies that  
$\sum \limits_{i \in A'} a_{i}/k < B$. The following calculation completes the proof of Case $1$.
\begin{align*}
 E_{0,1} &  \ge  \sum \limits_{i = 1}^n F(l_{i}) + P'(d) \sum \limits_{i \in A'} a_{i}/k +  \frac  {P(s*)}{s*} ( B - \sum \limits_{i \in A'} a_{i}/k) \\
& =  \sum \limits_{i = 1}^n F(l_{i}) + B \frac {P(s*)}{s*} - (\sum \limits_{i \in A'} a_{i}/k) (  \frac {P(s*)}{s*} - P'(d)) \\
& >  \sum \limits_{i = 1}^n F(l_{i}) + B \frac {P(s*)}{s*} - B( \frac {P(s*)}{s*} - P'(d)) \\
& =  \sum \limits_{i = 1}^n F(l_{i}) + B P'(d)
  \end{align*}
\\ \\
\textbf{Case 2.} $b_{i} \neq a_{i}/k$  for all $i \in A'$  \\ \\

In this case, the following calculation completes the proof.
\begin{align*}
E_{0,1} & = \sum \limits_{i \in A'} (F(l_{i}) + slope(l_{i}+b_{i}) * b_{i}) + \sum \limits_{i \notin A'} (F(l_{i}) + slope(l_{i}+b_{i}) * b_{i}) \\
 & > \sum \limits_{i \in A'} F(l_{i}) + P'(d) \sum \limits_{i \in A'} b_{i}  + \sum \limits_{i \notin A'} F(l_{i}) + P'(d) \sum \limits_{i \notin A'} b_{i} \\
& =  \sum \limits_{i = 1}^n F(l_{i}) + B P'(d).
\end{align*} 
\qed
\end{proof}

\small

\end{document}